\newcommand{\doublewidetilde}[1]{{%
		\mathpalette\double@widetilde{#1}}}
\newcommand{\double@widetilde}[2]{%
	\sbox\z@{$\m@th#1\widetilde{#2}$}%
	\ht\z@=.5\ht\z@
	\widetilde{\box\z@}}
\newtheorem{theorem}{Theorem}
\newtheorem{lemma}{Lemma}
\newtheorem{corollary}{Corollary}
\newtheorem{assumption}{Assumption}
\begin{document}
	%
	\title{\huge RIS and  Cell-Free Massive MIMO: A  Marriage For Harsh Propagation Environments}

	\author{\IEEEauthorblockN{Trinh~Van~Chien$^{\ast}$,  Hien~Quoc~Ngo$^{\dagger}$, Symeon~Chatzinotas$^{\ast}$, Marco~Di~Renzo$^{\xi}$, and Bj\"{o}rn~Ottersten$^{\ast}$}
		\IEEEauthorblockA{$^{\ast}$Interdisciplinary Centre for Security, Reliability and Trust (SnT), University of Luxembourg, Luxembourg\\
			$^{\dagger}$School of Electronics, Electrical Engineering and Computer Science, Queen's University Belfast, Belfast, UK \\
			$^{\xi}$Universit\'e  Paris-Saclay,  CNRS,  CentraleSup\'elec, Laboratoire  des  Signaux  et  Syst\`emes,  France \vspace*{-0.8cm}}
		\thanks{This work of T. V. Chien, S. Chatzinotas, and B. Ottersten was supported by RISOTTI - Reconfigurable Intelligent Surfaces for Smart Cities under project FNR/C20/IS/14773976/RISOTTI. The work of H. Q. Ngo was supported by the UK Research and Innovation Future Leaders Fellowships under Grant MR/S017666/1. The work of M. Di Renzo was supported in part by the European Commission through the H2020 ARIADNE project under grant agreement number 871464 and through the H2020 RISE-6G project under grant agreement number 101017011. The long version of this paper was submitted to the IEEE Trans. Wireless Comm. \cite{Chien2021TWC}.} 
	}
	
	\maketitle
	
	
	\begin{abstract}
		This paper considers Cell-Free Massive  Multiple Input Multiple Output (MIMO) systems
		with the assistance of an RIS for enhancing the system
		performance. Distributed maximum-ratio combining (MRC) is
		considered at the access points (APs). We introduce an aggregated
		channel estimation method that provides sufficient information
		for data processing. The considered system is studied by using
		asymptotic analysis which lets the number of APs and/or the
		number of RIS elements grow large. A lower bound for the
		channel capacity is obtained for a finite number of APs and
		engineered scattering elements of the RIS, and closed-form expression for
		the uplink ergodic net throughput is formulated. In addition,
		a simple scheme for controlling the configuration of the
		RIS scattering elements is proposed. Numerical results verify the
		effectiveness of the proposed system design and the benefits of using RISs
		in Cell-Free Massive MIMO systems are quantified.
	\end{abstract}
	

	%
	\IEEEpeerreviewmaketitle

	\vspace*{-0.25cm}
	\section{Introduction}
	\vspace*{-0.15cm}
	Cell-Free Massive Multiple Input Multiple Output (MIMO) has recently been introduced to
	reduce the intercell interference of colocated Massive MIMO architectures. This is a network deployment where a large number of access points (APs) are located in a given coverage area to serve a small number of users \cite{ngo2017cell}. All the APs collaborate with each other via a backhaul network and serve all the users in the absence of cell boundaries. The system performance is enhanced in Cell-Free Massive MIMO systems because they inherit the benefits of the distributed MIMO and network MIMO architectures, but the users are also close to the APs. When each AP is equipped with a single antenna, maximum-ratio combining (MRC) results in a good net throughput for every user, while ensuring a low computational complexity and offering a distributed implementation that is convenient for scalability purposes. However, this network deployment cannot guarantee a good service under harsh propagation environments.
	
	RIS is an emerging technology that is capable of shaping
	the radio waves at the electromagnetic level without applying digital signal processing methods and requiring power amplifiers \cite{le2020robust}. Each element of the RIS scatters (e.g., reflects) the incident signal without using radio frequency chains and power amplification. Integrating an RIS into wireless networks introduces digitally controllable links that scale up with the number of engineered scattering elements of the RIS, whose estimation
	is, however, challenged by the lack of digital signal processing units at the RIS. For simplicity, the main attention has so far been concentrated on designing the phase shifts with perfect channel state information (CSI) \cite{zhao2020intelligent,perovic2021achievable} and the references therein. As far as the integration of Cell-Free Massive MIMO
	and RIS is concerned, recent works have formulated and
	solved optimization problems with different communication
	objectives under the assumption of perfect (and instantaneous) CSI \cite{zhou2020achievable,zhang2020capacity}. Recent results have shown that designs for the phase shifts of the RIS elements based on statistical CSI may be of practical interest and provide good performance \cite{van2021outage}. In the depicted context, no prior work has analyzed an RIS-assisted Cell-Free Massive MIMO system in the presence of spatially-correlated channels.
	
	In this work, we consider an RIS-assisted Cell-Free Massive MIMO under spatially correlated channels. We exploit a channel estimation scheme that estimates the aggregated channels including both the direct and indirect links. We analytically
	show that, even by using a low complexity MRC technique,
	the non-coherent interference, small-scale fading effects, and additive noise are averaged out when the number of APs and RIS elements increases. The received signal includes, hence, only the desired signal and the coherent interference. We derive a closed-form expression of the net throughput for the uplink data transmission. The impact of the array gain, coherent joint transmission, channel estimation errors, pilot contamination, spatial correlation, and phase shifts of the RIS, which determine the system performance, are explicitly observable in the obtained analytical expressions. With the aid of numerical simulations, we verify the effectiveness of the proposed channel estimation scheme and the accuracy of the closed-form expression of the net throughput. The obtained numerical results show that the use of RISs enhance the net throughput per user significantly, especially when the direct links are blocked with high probability.   
	
	\textit{Notation}: Upper and lower bold letters denote matrices and vectors. The identity matrix of size $N \times N$ is denoted by $\mathbf{I}_N$. $(\cdot)^{\ast},$ $(\cdot)^T,$ and $(\cdot)^H$ are the complex conjugate, transpose, and Hermitian transpose. $\mathbb{E}\{ \cdot\}$ and $\mathsf{Var} \{ \cdot \}$ denote the expectation and variance of a random variable. The circularly symmetric Gaussian distribution is denoted by  $\mathcal{CN}(\cdot, \cdot)$ and  $\mathrm{diag} (\mathbf{x})$ is the diagonal matrix whose main diagonal is given by $\mathbf{x}$. $\mathrm{tr}(\cdot)$ is the trace operator. The Euclidean norm of vector $\mathbf{x}$ is $\| \mathbf{x}\|$, and $\| \mathbf{X} \|$ is the spectral norm of matrix $\mathbf{X}$. Finally, $\mathrm{mod}(\cdot,\cdot)$ is the modulus operation and $\lfloor \cdot \rfloor$ denotes the truncated argument.
	\begin{figure}[t]
		\centering
		\includegraphics[trim=3.0cm 2.4cm 5.7cm 6.8cm, clip=true, width=2.6in]{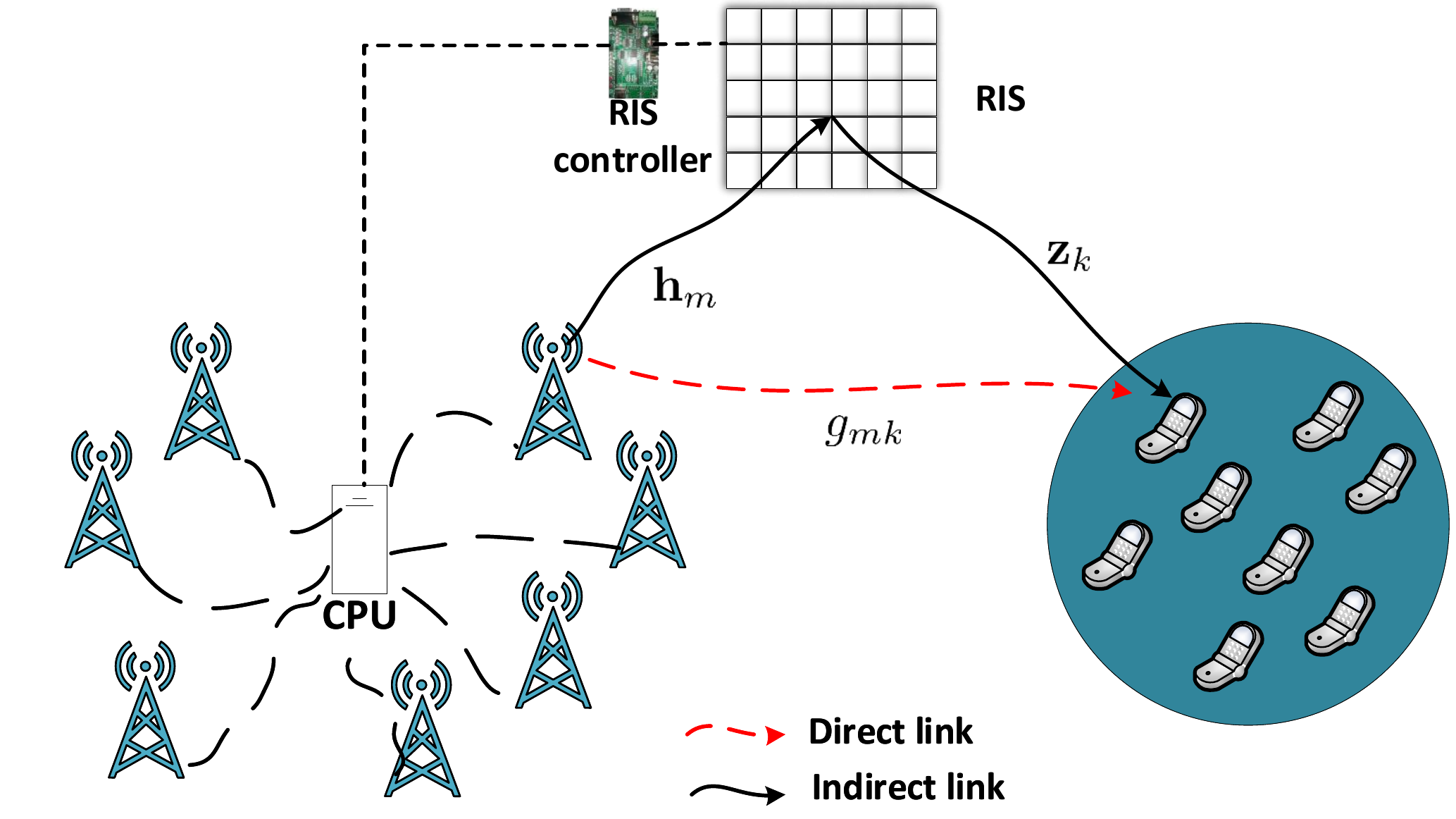} \vspace*{-0.2cm}
		\caption{An RIS-assisted Cell-Free Massive MIMO system where $M$ APs collaborate with each other to serve $K$ distant users.}
		\label{FigSysModel}
		\vspace*{-0.5cm}
	\end{figure}
	\vspace*{-0.5cm}
	\section{System Model, Channel Estimation, and RIS Phase-Shift Control} \label{Sec:SysModel}
	\vspace*{-0.2cm}
	We consider an RIS-assisted Cell-Free Massive MIMO
	system, where $M$ APs connected to a central processing
	unit (CPU) serve $K$ users on the same time and frequency
	resource. All APs and users are equipped with a single
	antenna and they are randomly located in the coverage area. The communication is assisted by an RIS that comprises $N$ engineered 
	scattering elements that can modify the phases of the incident signals. The matrix of phase shifts of the RIS is denoted by $\pmb{\Phi} = \mathrm{diag} \left( [ e^{j\theta_1}, \ldots, e^{j\theta_N}]^T \right)$, where $\theta_n \in [-\pi, \pi]$ is the phase
	shift applied by the $n$-th element of the RIS.
	\vspace*{-0.2cm}
	\subsection{Channel Model}
	\vspace*{-0.2cm}
	We assume a quasi-static block fading model where each coherence interval comprises $\tau_c$ symbols. The APs have knowledge of only the channel statistics instead of the instantaneous channel realizations. Also, $\tau_p$ symbols ($\tau_p < \tau_c$) in each coherence interval are dedicated to the channel estimation and the remaining $(\tau_c - \tau_p)$ symbols are the data transmission.
	
	The following notation is used: $g_{mk}$ is the channel between
	the user~$k$ and the AP~$m$, which is the direct link \cite{wu2019intelligent};
	$\mathbf{h}_m \in \mathbb{C}^{N}$ is the channel between the AP~$m$ and the RIS; and $\mathbf{z}_{k} \in \mathbb{C}^{N}$ is the channel between the RIS and the user~$k$. In this paper, we consider a realistic channel model by taking into account the spatial correlation among the scattering elements of the RIS, which is due to their sub-wavelength size, sub-wavelength inter-distance, and geometric layout. In an isotropic propagation environment, in particular, $g_{mk}, \mathbf{h}_m$, and $\mathbf{z}_k$ can be modeled as follows
	\begin{equation} \label{eq:Channels}
		g_{mk} \sim \mathcal{CN}(0, \beta_{mk}), \mathbf{h}_{m} \sim \mathcal{CN}(\mathbf{0}, \mathbf{R}_{m}), \mathbf{z}_{k}  \sim \mathcal{CN}(\mathbf{0}, \widetilde{\mathbf{R}}_{k}),
	\end{equation}
	where $\beta_{mk}$ is the large-scale fading coefficient; $\mathbf{R}_{m} \in \mathbb{C}^{N \times N}$
	and $\widetilde{\mathbf{R}}_{k} \in \mathbb{C}^{N \times N}$ are the covariance matrices. The covariance
	matrices in \eqref{eq:Channels} correspond to a general model, which can be further particularized for application to typical RIS designs and propagation environments. A correlation model that is applicable to isotropic scattering with uniformly distributed multipath components in the half-space in front of the RIS was recently reported in \cite{bjornson2020rayleigh}, whose covariance matrices are
	\begin{equation}\label{eq:CovarMa}
		\mathbf{R}_m = \alpha_{m} d_Hd_V\mathbf{R} \mbox{ and } \widetilde{\mathbf{R}}_{k} = \tilde{\alpha}_{k} d_Hd_V\mathbf{R}, 
	\end{equation}
	where $\alpha_{m}, \tilde{\alpha}_{mk} \in \mathbb{C}$ are the large-scale channel coefficients. The matrices in \eqref{eq:CovarMa} assume that the size of each RIS element
	is $d_H \times d_V$, with $d_H$ being the horizontal width and $d_V$ being the vertical height of each RIS element. In particular, the $(l,t)$-th element of the spatial correlation matrix $\mathbf{R} \in \mathbb{C}^{N \times N }$ in \eqref{eq:CovarMa} is $[\mathbf{R}]_{lt}= \mathrm{sinc} (2 \|\mathbf{u}_{l} - \mathbf{u}_{t} \|/ \lambda)$, where $\lambda$ is the wavelength and $\mathrm{sinc}(x) = \sin(\pi x) / (\pi x)$ is the $\mathrm{sinc}$ function. The vector $\mathbf{u}_{x}, x \in \{l,t\}$ is given by $\mathbf{u}_{x} = [0, \mod(x-1,N_H)d_H, \lfloor (x-1)/N_H\rfloor d_V]^T$, where $N_H$ and $N_V$
	denote the total number of RIS elements in each row and column, respectively.
	\vspace*{-0.15cm}
	\subsection{Uplink Pilot Training Phase}
	\vspace*{-0.15cm}
	The channels are independently estimated from the $\tau_p$ pilot
	sequences transmitted by the $K$ users. All the users share the
	same $\tau_p$ pilot sequences. In particular, $\pmb{\phi}_k \in \mathbb{C}^{\tau_p}$ with $\| \pmb{\phi}_k \|^2 = 1$ is defined as the pilot sequence allocated to the user~$k$. We
	denote by $\mathcal{P}_k$ the set of indices of the users (including the user~$k$) that share the same pilot sequence as the user~$k$. The pilot
	sequences are assumed to be mutually orthogonal such that
	the pilot reuse pattern is $\pmb{\phi}_{k'}^H \pmb{\phi}_k =  1$,  $k' \in \mathcal{P}_k$. Otherwise, $\pmb{\phi}_{k'}^H \pmb{\phi}_k =  0$. During the pilot training phase, all the $K$ users
	transmit the pilot sequences to the $M$ APs simultaneously. In
	particular, the user~$k$ transmits the pilot sequence $\sqrt{\tau_p} \pmb{\phi}_k$. The
	received training signal at the AP~$m$ can be written as 
	\begin{equation} \label{eq:ReceivedPilot}
		\mathbf{y}_{pm} = \sum_{k=1}^K \sqrt{p \tau_p}  g_{mk} \pmb{\phi}_k + \sum_{k=1}^K \sqrt{p \tau_p} \mathbf{h}_{m}^H \pmb{\Phi} \mathbf{z}_{k} \pmb{\phi}_k  + \mathbf{w}_{pm},
	\end{equation}
	where $p$ is the normalized signal-to-noise ratio (SNR) of each
	pilot symbol, and $\mathbf{w}_{pm} \in \mathbb{C}^{\tau_p}$ is the additive noise at the AP~$m$, which is distributed as $\mathbf{w}_{pm} \sim \mathcal{CN} (\mathbf{0}, \mathbf{I}_{\tau_p})$. In order
	for the AP~$m$ to estimate the desired channels from the user~$k$,
	the received training signal in \eqref{eq:ReceivedPilot} is projected on $\pmb{\phi}_k^H$
	as
	\begin{equation} \label{eq:ReceivedPilotv1}
		\begin{split}
			y_{pmk} &= \pmb{\phi}_k^H \mathbf{y}_{pm} =  \sqrt{p \tau_p}  \left(g_{mk} +  \mathbf{h}_{m}^H \pmb{\Phi} \mathbf{z}_{k} \right) \\
			&+ \sum_{k' \in \mathcal{P}_k \setminus \{k\} } \sqrt{p\tau_p}  \left(g_{mk'} +  \mathbf{h}_{m}^H \pmb{\Phi} \mathbf{z}_{k'} \right) + w_{pmk},
		\end{split}
	\end{equation}
	where $w_{pmk} = \pmb{\phi}_k^H \mathbf{w}_{pm} \sim \mathcal{CN}(0, 1)$. We emphasize that
	the co-existence of the direct and indirect channels due to the
	presence of the RIS results in a complicated channel estimation
	process. In particular, the cascaded channel in \eqref{eq:ReceivedPilotv1} results in a nontrivial procedure to apply the minimum mean-square error
	(MMSE) estimation method, as reported in previous works, for
	processing the projected signals \cite{ngo2017cell,9136914}. Based on the specific
	signal structure in \eqref{eq:ReceivedPilotv1}, we denote the channel between the
	AP~$m$ and the user~$k$ through the RIS as
	\begin{equation} \label{eq:umk}
		u_{mk} = g_{mk} + \mathbf{h}_{m}^H \pmb{\Phi} \mathbf{z}_{k}, 
	\end{equation}
	which is referred to as the \textit{aggregated channel} that comprises
	the direct and indirect link between the user~$k$ and the AP~$m$.
	By capitalizing on the definition of the aggregated channel
	in \eqref{eq:umk}, the required channels can be estimated in an effective
	manner even in the presence of the RIS. In particular, the
	aggregated channel in \eqref{eq:umk} is given by the product of weighted
	complex Gaussian and spatially correlated random variables,
	as given in \eqref{eq:Channels}. Conditioned on the phase shifts, we employ the
	linear MMSE method for estimating $u_{mk}$ at the AP. Despite
	the complex structure of the RIS-assisted channels, Lemma~\ref{lemma:ChannelEst}
	provides analytical expressions of the estimated channels.
	\begin{lemma} \label{lemma:ChannelEst}
		By assuming that the AP~$m$ employs the linear MMSE estimation method based on the observation in \eqref{eq:ReceivedPilotv1}, the estimate of the aggregate channel ${u}_{mk}$ is formulated as
		\begin{equation} \label{eq:ChannelEst}
			\hat{u}_{mk} =  \big(\mathbb{E}\{ y_{pmk}^\ast u_{mk} \} y_{pmk} \big)/ \mathbb{E} \{ | y_{pmk} |^2 \}   = c_{mk} y_{pmk},
		\end{equation}
		where $c_{mk} =  \mathbb{E}\{ y_{pmk}^\ast u_{mk} \} / \mathbb{E} \{ | y_{pmk} |^2 \}$ has the following closed-form expression
		\begin{equation} \label{eq:cmk}
			c_{mk} =  \frac{\sqrt{p\tau_p} \big( \beta_{mk} +  \mathrm{tr} \big( \pmb{\Phi}^H \mathbf{R}_{m} \pmb{\Phi} \widetilde{\mathbf{R}}_{k} \big)\big) }{p\tau_p \sum_{k' \in \mathcal{P}_k} \big( \beta_{mk'} +  \mathrm{tr} \big( \pmb{\Phi}^H \mathbf{R}_{m} \pmb{\Phi} \widetilde{\mathbf{R}}_{k'} \big)\big) + 1}.
		\end{equation}
		The estimated channel in \eqref{eq:ChannelEst} has zero mean and variance $\gamma_{mk}$ equal to
		\begin{equation} \label{eq:gammamk}
			\gamma_{mk} = \mathbb{E} \{ |\hat{u}_{mk}|^2 \} = \sqrt{p\tau_p}  \big(\beta_{mk} + \mathrm{tr} \big( \pmb{\Phi}^H \mathbf{R}_{m} \pmb{\Phi} \widetilde{\mathbf{R}}_{k} \big) \big)c_{mk}.
		\end{equation}
		Also, the channel estimation error $e_{mk} = u_{mk} - \hat{u}_{mk}$ and the channel estimate $\hat{u}_{mk}$ are uncorrelated. The channel estimation error has zero mean and variance equal to
		\begin{equation} \label{eq:EstError}
			\mathbb{E}\big\{ |e_{mk} |^2 \big\} = \beta_{mk} +  \mathrm{tr} \big( \pmb{\Phi}^H \mathbf{R}_{m} \pmb{\Phi} \widetilde{\mathbf{R}}_{k} \big) - \gamma_{mk}.
		\end{equation}
	\end{lemma}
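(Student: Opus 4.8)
The plan is to recognize \eqref{eq:ChannelEst} as the standard scalar linear MMSE estimator and reduce the whole lemma to two second-order moment computations. Since all three channels in \eqref{eq:Channels} are zero-mean, both $u_{mk}$ and the projected observation $y_{pmk}$ in \eqref{eq:ReceivedPilotv1} have zero mean, so the LMMSE estimate of $u_{mk}$ from the single complex observation $y_{pmk}$ is $\hat{u}_{mk} = \left(\mathbb{E}\{y_{pmk}^\ast u_{mk}\}/\mathbb{E}\{|y_{pmk}|^2\}\right) y_{pmk}$, which is exactly the claimed form. Everything else then follows once the cross-correlation $\mathbb{E}\{y_{pmk}^\ast u_{mk}\}$ (the numerator of $c_{mk}$) and the observation power $\mathbb{E}\{|y_{pmk}|^2\}$ (its denominator) are evaluated in closed form.

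First I would compute $\mathbb{E}\{|u_{mk}|^2\}$. Because $g_{mk}$ is independent of $\mathbf{h}_m$ and $\mathbf{z}_k$ and all are zero-mean, the cross term vanishes and $\mathbb{E}\{|u_{mk}|^2\} = \beta_{mk} + \mathbb{E}\{|\mathbf{h}_m^H \pmb{\Phi}\mathbf{z}_k|^2\}$. For the RIS term I would condition on $\mathbf{h}_m$, use $\mathbb{E}\{\mathbf{z}_k \mathbf{z}_k^H\} = \widetilde{\mathbf{R}}_k$, then average over $\mathbf{h}_m$ via $\mathbb{E}\{\mathbf{h}_m \mathbf{h}_m^H\} = \mathbf{R}_m$ together with the cyclic property of the trace, obtaining $\mathbb{E}\{|\mathbf{h}_m^H \pmb{\Phi}\mathbf{z}_k|^2\} = \mathrm{tr}(\pmb{\Phi}^H \mathbf{R}_m \pmb{\Phi}\widetilde{\mathbf{R}}_k)$. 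Since $y_{pmk}^\ast$ equals $\sqrt{p\tau_p}\,u_{mk}^\ast$ plus terms involving the pilot-sharing users $k' \neq k$ and the noise, all uncorrelated with $u_{mk}$, this yields $\mathbb{E}\{y_{pmk}^\ast u_{mk}\} = \sqrt{p\tau_p}\big(\beta_{mk} + \mathrm{tr}(\pmb{\Phi}^H \mathbf{R}_m \pmb{\Phi}\widetilde{\mathbf{R}}_k)\big)$, matching the numerator of \eqref{eq:cmk}. Writing $y_{pmk} = \sqrt{p\tau_p}\sum_{k' \in \mathcal{P}_k} u_{mk'} + w_{pmk}$, the denominator follows by expanding $\mathbb{E}\{|y_{pmk}|^2\}$ and adding the unit-variance noise, reproducing \eqref{eq:cmk}.

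The remaining claims follow from the LMMSE structure. Zero mean of $\hat{u}_{mk}$ is immediate from $\mathbb{E}\{y_{pmk}\}=0$. For the variance I would use $\gamma_{mk} = c_{mk}^2\,\mathbb{E}\{|y_{pmk}|^2\} = c_{mk}\,\mathbb{E}\{y_{pmk}^\ast u_{mk}\}$, which equals \eqref{eq:gammamk} after substituting the numerator above (note $c_{mk}$ is real, as both numerator and denominator are real and positive). Uncorrelatedness of $e_{mk}=u_{mk}-\hat{u}_{mk}$ and $\hat{u}_{mk}$ is the orthogonality principle: $\mathbb{E}\{e_{mk} y_{pmk}^\ast\} = \mathbb{E}\{u_{mk} y_{pmk}^\ast\} - c_{mk}\,\mathbb{E}\{|y_{pmk}|^2\} = 0$ by the very definition of $c_{mk}$, and since $\hat{u}_{mk}$ is a scalar multiple of $y_{pmk}$ it follows that $\mathbb{E}\{e_{mk}\hat{u}_{mk}^\ast\}=0$. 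Finally, orthogonality gives the Pythagorean decomposition $\mathbb{E}\{|u_{mk}|^2\} = \gamma_{mk} + \mathbb{E}\{|e_{mk}|^2\}$, which rearranges to \eqref{eq:EstError}.

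The main obstacle I anticipate is the bookkeeping of correlations across the pilot-sharing set $\mathcal{P}_k$, because the aggregated channels $u_{mk'}$ all share the same AP--RIS channel $\mathbf{h}_m$ and are therefore not obviously independent. The key point to verify carefully is that, for $k' \neq k''$, conditioning on $\mathbf{h}_m$ renders $\mathbf{h}_m^H\pmb{\Phi}\mathbf{z}_{k'}$ and $\mathbf{h}_m^H\pmb{\Phi}\mathbf{z}_{k''}$ independent and zero-mean (since $\mathbf{z}_{k'}$ and $\mathbf{z}_{k''}$ are independent and zero-mean), so the cross-correlation $\mathbb{E}\{u_{mk'}^\ast u_{mk''}\}$ vanishes despite the shared $\mathbf{h}_m$; this is exactly what collapses $\mathbb{E}\{|\sum_{k' \in \mathcal{P}_k} u_{mk'}|^2\}$ into the diagonal sum appearing in \eqref{eq:cmk}. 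The only other computation needing care is the trace identity for the RIS quadratic form, which relies on keeping the conditioning order straight, namely averaging over $\mathbf{z}_k$ first and then over $\mathbf{h}_m$.
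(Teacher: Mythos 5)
Your proposal is correct and follows exactly the route the paper intends: the paper's proof is only a one-line pointer to the standard linear MMSE derivation (Kay's estimation-theory argument) adapted to the RIS-assisted signal in \eqref{eq:ReceivedPilotv1}, and your write-up fills in precisely those steps -- the zero-mean LMMSE form, the trace identity $\mathbb{E}\{|\mathbf{h}_m^H \pmb{\Phi}\mathbf{z}_k|^2\} = \mathrm{tr}(\pmb{\Phi}^H\mathbf{R}_m\pmb{\Phi}\widetilde{\mathbf{R}}_k)$, the decorrelation of pilot-sharing users' aggregated channels despite the shared $\mathbf{h}_m$, and the orthogonality principle yielding \eqref{eq:gammamk} and \eqref{eq:EstError}. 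Your flagged subtlety (realness of $c_{mk}$, via similarity of $\pmb{\Phi}^H\mathbf{R}_m\pmb{\Phi}\widetilde{\mathbf{R}}_k$ to a positive semi-definite matrix) is handled correctly, so no gaps remain.
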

	\begin{proof}
		It is similar to the proof in \cite{Kay1993a}, and is obtained by
		applying similar analytical steps to the received signal in \eqref{eq:ReceivedPilotv1}
		and by taking into account the structure of the RIS-assisted
		channel and the spatial correlation matrices in \eqref{eq:Channels}.
	\end{proof}
	Lemma~\ref{lemma:ChannelEst} shows that, by assuming $\pmb{\Phi}$ fixed, the aggregated channel in \eqref{eq:umk} can be estimated without increasing the pilot
	training overhead, as compared to a conventional Cell-Free
	Massive MIMO system. The obtained channel estimate in
	\eqref{eq:ChannelEst} unveils the relation $\hat{u}_{mk'}  = \frac{c_{mk'}}{c_{mk}}\hat{u}_{mk}$ if the user~$k'$
	uses the same pilot sequence as the user~$k$. Because of pilot
	contamination, it may be difficult to distinguish the signals of
	these two users. In the following, the analytical expression of
	the channel estimates in Lemma~\ref{lemma:ChannelEst} are employed for signal
	detection in the uplink data transmission. They are used also
	to optimize the phase shifts of the RIS in order to minimize
	the channel estimation error and to evaluate the corresponding
	ergodic net throughput.
	\vspace{-0.2cm}
	\subsection{RIS Phase-Shift Control}
	\vspace{-0.1cm}
	Channel estimation is a critical aspect in Cell-Free Massive
	MIMO. As discussed in previous text, in many scenarios,
	non-orthogonal pilots have to be used. This causes pilot
	contamination, which may reduce the system performance
	significantly. In this section, we design an RIS-assisted phase
	shift control scheme that is aimed to improve the quality of
	channel estimation. To this end, we introduce the normalized
	mean square error (NMSE) of the channel estimate of the
	user~$k$ at the AP~$m$ as follows
	\begin{equation} 
		\begin{split}
			& \mathrm{NMSE}_{mk} = \mathbb{E}\{|e_{mk}|^2\}/\mathbb{E}\{|u_{mk}|^2\} \\
			&=1 - \frac{p \tau_p \big(\beta_{mk} + \mathrm{tr}(\pmb{\Phi}^H \mathbf{R}_m  \pmb{\Phi} \widetilde{\mathbf{R}}_{k} ) \big)}{p \tau_p \sum_{k' \in \mathcal{P}_{k}}\big( \beta_{mk'} + \mathrm{tr}(\pmb{\Phi}^H \mathbf{R}_m  \pmb{\Phi} \widetilde{\mathbf{R}}_{k'} ) \big) + 1}.
		\end{split}
	\end{equation}
	\vspace{-0.1cm}
	where the last equality is obtained from \eqref{eq:EstError}. We optimize the
	phase shift matrix $\pmb{\Phi}$ of the RIS so as to minimize the total
	NMSE obtained from all the users and all the APs as follows
	\begin{equation} \label{Prob:NMSEk}
		\begin{aligned}
			& \underset{\{ \theta_n \} }{\mathrm{minimize}}
			&&   \sum_{m=1}^M \sum_{k=1}^K \mathrm{NMSE}_{mk} \\
			& \,\,\mathrm{subject \,to}
			& & -\pi \leq \theta_n \leq \pi, \forall n.
		\end{aligned}
	\end{equation}
	The optimal phase shifts solution to problem~\eqref{Prob:NMSEk} is obtained
	by exploiting the statistical CSI that include the large-scale
	fading coefficients and the covariance matrices. Problem~\eqref{Prob:NMSEk}
	is a fractional program, whose globally-optimal solution is
	not simple to be obtained for an RIS with a large number
	of independently tunable elements. Nonetheless, in the special network setup where the direct links from the APs to the users
	are weak enough to be negligible with respect to the RIS-assisted
	links, the optimal solution to problem~\eqref{Prob:NMSEk} is available
	in a closed-form expression as summarized in Corollary~\ref{corollary:EqualPhase}.
	\begin{corollary} \label{corollary:EqualPhase}
		If the direct links are weak enough to be negligible and the RIS-assisted channels are spatially correlated as formulated in \eqref{eq:CovarMa}, the optimal maximizer of the optimization problem in \eqref{Prob:NMSEk} is $\theta_1 = \ldots = \theta_N$, i.e., the equal phase shift design is optimal.
	\end{corollary}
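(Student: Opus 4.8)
The plan is to show that, under the two stated assumptions, the entire dependence of the objective in \eqref{Prob:NMSEk} on the phase shifts collapses onto a single scalar functional of $\pmb{\Phi}$, which is then maximized by the equal-phase design. First I would set $\beta_{mk}=0$ for all $m,k$ (negligible direct links) in the NMSE expression and substitute the separable covariance structure $\mathbf{R}_m=\alpha_m d_H d_V \mathbf{R}$ and $\widetilde{\mathbf{R}}_k=\tilde{\alpha}_k d_H d_V \mathbf{R}$ from \eqref{eq:CovarMa}. Because of this proportionality, every trace term factors as $\mathrm{tr}\big(\pmb{\Phi}^H \mathbf{R}_m \pmb{\Phi}\widetilde{\mathbf{R}}_k\big)=\alpha_m\tilde{\alpha}_k (d_H d_V)^2\, T(\pmb{\Phi})$, where $T(\pmb{\Phi})=\mathrm{tr}\big(\pmb{\Phi}^H \mathbf{R}\pmb{\Phi}\mathbf{R}\big)$ is the only object in which the phases appear. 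Each summand of the objective thus becomes a function of $\pmb{\Phi}$ solely through the scalar $T(\pmb{\Phi})$.

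The second step is a monotonicity argument that decouples the optimization from the particular large-scale coefficients. Writing each term as $\mathrm{NMSE}_{mk}=1-\frac{c_1 T}{c_2 T+1}$ with $c_1=p\tau_p\,\alpha_m\tilde{\alpha}_k (d_H d_V)^2\ge 0$ and $c_2=p\tau_p\,\alpha_m\big(\sum_{k'\in\mathcal{P}_k}\tilde{\alpha}_{k'}\big)(d_H d_V)^2\ge 0$, I would differentiate the fraction with respect to $T$ and note that its derivative equals $c_1/(c_2 T+1)^2\ge 0$; hence $\mathrm{NMSE}_{mk}$ is nonincreasing in $T$ for every $(m,k)$ simultaneously (the relevant domain $T\ge 0$ follows since $\pmb{\Phi}^H\mathbf{R}_m\pmb{\Phi}$ and $\widetilde{\mathbf{R}}_k$ are positive semidefinite). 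Consequently, minimizing $\sum_{m,k}\mathrm{NMSE}_{mk}$ is equivalent to maximizing the single scalar $T(\pmb{\Phi})$ over the feasible set $\theta_n\in[-\pi,\pi]$.

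The third step evaluates and bounds $T(\pmb{\Phi})$. Setting $\phi_n=e^{j\theta_n}$ and using that $\mathbf{R}$ is real and symmetric (so $[\mathbf{R}]_{lt}=[\mathbf{R}]_{tl}$ and $[\mathbf{R}]_{lt}^2\ge 0$), a direct index expansion gives $T(\pmb{\Phi})=\sum_{l,t} e^{j(\theta_t-\theta_l)}[\mathbf{R}]_{lt}^2=\sum_{l,t}\cos(\theta_t-\theta_l)\,[\mathbf{R}]_{lt}^2$, where the imaginary parts cancel upon pairing $(l,t)$ with $(t,l)$. Since $[\mathbf{R}]_{lt}^2\ge 0$ and $\cos(\theta_t-\theta_l)\le 1$ with equality iff $\theta_t=\theta_l$, I obtain $T(\pmb{\Phi})\le\sum_{l,t}[\mathbf{R}]_{lt}^2$, and this upper bound is attained by the equal-phase choice $\theta_1=\cdots=\theta_N$. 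Because equal phases maximize $T$, and $T$-maximization minimizes the objective, the equal-phase design solves \eqref{Prob:NMSEk}.

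The main obstacle is the first structural observation: recognizing that the separable covariance model of \eqref{eq:CovarMa} forces all $MK$ trace terms to share the common factor $T(\pmb{\Phi})$, so that the problem---\emph{a priori} a hard fractional program in $N$ coupled phase variables---reduces to maximizing one quadratic-in-phase scalar. Once this collapse is established, the monotonicity of each fraction and the elementary cosine bound are routine. A minor point worth stating carefully is that the large-scale coefficients $\alpha_m,\tilde{\alpha}_k$ are nonnegative (as required for $\mathbf{R}_m$ and $\widetilde{\mathbf{R}}_k$ to be valid covariance matrices), which guarantees $c_1,c_2\ge 0$ and hence fixes the monotone direction of the reduction.
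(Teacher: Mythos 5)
Your proof is correct and follows essentially the route the paper indicates: the conference version only sketches the argument (``analyzing the objective function with respect to the phase-shift elements,'' deferring details to the journal version), and your three steps---collapsing all $MK$ trace terms onto the single scalar $T(\pmb{\Phi})=\mathrm{tr}\big(\pmb{\Phi}^H\mathbf{R}\pmb{\Phi}\mathbf{R}\big)$ via the separable model \eqref{eq:CovarMa}, the monotonicity of each NMSE term in $T$, and the expansion $T=\sum_{l,t}\cos(\theta_t-\theta_l)[\mathbf{R}]_{lt}^2\le\sum_{l,t}[\mathbf{R}]_{lt}^2$ attained at equal phases---are exactly the intended completion of that sketch. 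The care you take with the nonnegativity of $\alpha_m,\tilde{\alpha}_k$ (needed for the monotone direction) and with $T\ge 0$ makes the argument complete as written.
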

	\begin{proof}
		The proof follows by analyzing the objective function
		of problem~\eqref{Prob:NMSEk} with respect to the phase-shift elements. The detailed proof is available in the journal version \cite{Chien2021TWC}.
	\end{proof}
	Corollary~\ref{corollary:EqualPhase} provides a simple but effective option to design the phase shifts of the RIS while ensuring the optimal
	estimation of the aggregated channels according to the sum-
	NMSE minimization criterion, provided that the direct link are
	completely blocked and the spatial correlation model in \eqref{eq:CovarMa}
	holds true. Therefore, an efficient channel estimation protocol
	can be designed even in the presence of an RIS with a
	large number of engineered scattering elements. The numerical results
	in Section~\ref{Sec:NumRes} show that the phase shift design in Corollary~\ref{corollary:EqualPhase}
	offers good gains in terms of net throughput even if the direct
	links are not negligible. 
	
	\vspace*{-0.2cm}
	\section{Uplink Data Transmission and Performance Analysis With MR Combining}\label{Sec:UL}
	\vspace*{-0.1cm}
	In this section, we introduce a procedure to detect the
	uplink transmitted signals and derive an asymptotic closed-form
	expression of the ergodic net throughput.
	\vspace*{-0.2cm}
	\subsection{Uplink Data Transmission Phase}
	\vspace*{-0.15cm}
	In the uplink, all the $K$ users transmit their data to the
	$M$ APs simultaneously. Specifically, the user $k$ transmits a
	modulated symbol $s_k$ with $\mathbb{E}\{|s_k|^2\} =1$. This symbol is
	weighted by a power control factor $\sqrt{\eta_k}$, $0 \leq \eta_k \leq 1$. Then, the received baseband signal, $y_{um} \in \mathbb{C},$ at the AP~$m$ is
	\begin{equation} \label{eq:yum}
		y_{m}  = \sqrt{\rho} \sum_{k=1}^K \sqrt{\eta_{k}} u_{mk} s_k + w_{m},
	\end{equation}
	where $\rho$ is the normalized uplink SNR of each data symbol
	and $w_{m}$ is the normalized additive noise with $w_{m} \sim \mathcal{CN}(0,1)$.
	For data detection, the MRC method is used at the CPU, i.e.,
	$\hat{u}_{mk}, \forall m,k,$ in \eqref{eq:ChannelEst} is employed to detect the data transmitted
	by the user~$k$. In mathematical terms, the corresponding
	decision statistic is
	\begin{equation} \label{eq:ruk}
		r_{k} = \sqrt{\rho} \sum_{m=1}^M \sum_{k'=1}^K  \sqrt{\eta_k} \hat{u}_{mk}^\ast u_{mk'} s_{k'} + \sum_{m=1}^M \hat{u}_{mk}^\ast w_{m}.
	\end{equation}
	Based on the observation $r_{k}$, the uplink ergodic net throughput
	of the user~$k$ is analyzed in the next subsection. 
	\vspace*{-0.2cm}
	\subsection{Asymptotic Analysis} \label{subsec:Asymul}
	\vspace*{-0.15cm}
	Since the number of APs, $M$, and the number of tunable
	elements of the RIS, $N$, can be large, we analyze the performance
	of two case studies: $(i)$ $N$ is fixed and $M$ is large;
	and $(ii)$ both $N$ and $M$ are large. The
	asymptotic analysis is conditioned upon a given setup of the
	CSI. To this end, the uplink weighted signal in \eqref{eq:ruk} is split
	into three terms based on the pilot reuse set $\mathcal{P}_k$, as follows
	\begin{equation} \label{eq:rukv1}
		\begin{split}
			& r_{k} =  \underbrace{\sqrt{\rho} \sum_{k' \in \mathcal{P}_k } \sum_{m=1}^M  \sqrt{\eta_{k'}} \hat{u}_{mk}^\ast u_{mk'} s_{k'}}_{\mathcal{T}_{k1}} + \\
			& \underbrace{\sqrt{\rho} \sum_{k' \notin \mathcal{P}_k} \sum_{m=1}^M   \sqrt{\eta_{k'}} \hat{u}_{mk}^\ast u_{mk'} s_{k'}}_{\mathcal{T}_{k2}} + \underbrace{\sum_{m=1}^M \hat{u}_{mk}^\ast w_{m}}_{\mathcal{T}_{k3}},
		\end{split}
	\end{equation}
	where $\mathcal{T}_{k1}$ accounts for the signals received from all the users
	in $\mathcal{P}_k$, and $\mathcal{T}_{k2}$ accounts for the mutual interference from the
	users that are assigned orthogonal pilot sequences. The impact
	of the additive noise obtained after applying MR combining is
	given by $\mathcal{T}_{k3}$. From \eqref{eq:ReceivedPilotv1}-\eqref{eq:ChannelEst}, we obtain the following identity
	\begin{equation} \label{eq:Termv1}
		\begin{split}
			& \sum_{m=1}^M \sqrt{\eta_{k'}} \hat{u}_{mk}^\ast u_{mk'}  =  \sum_{k'' \in \mathcal{P}_k \setminus \{k' \} } \sum_{m=1}^M \sqrt{\eta_{k'}p\tau_p}  c_{mk} u_{mk'}  u_{mk''}^{\ast}   \\
			&+ \sum_{m=1}^M \sqrt{\eta_{k'} p \tau_p}  c_{mk} |u_{mk'}|^2 + \sum_{m=1}^M \sqrt{\eta_{k'}} c_{mk} u_{mk'}   w_{pmk}^\ast,
		\end{split}
	\end{equation}
	\subsubsection{Case I}
	$N$ is fixed and $M$ is large, i.e.,  $M \rightarrow \infty$. In
	this case, we divide both sides of \eqref{eq:Termv1} by $M$ and exploits
	Tchebyshev’s theorem \cite{cramer2004random}\footnote{Let $X_1, \ldots, X_n$ be independent random variables such that $\mathbb{E}\{ X_i \} = \bar{x}_i$ and $\mathsf{Var}\{ X_i\} \leq c < \infty$. Then, Tchebyshev's theorem states $\frac{1}{n}\sum_{n'=1}^n X_{n'} \xrightarrow[n \rightarrow \infty]{P} \frac{1}{n} \sum_{n'} \bar{x}_{n'}.$} to obtain
	\begin{multline} \label{eq:Tchev1}
		\frac{1}{M} \sum_{m=1}^M \sqrt{\eta_{k'}} \hat{u}_{mk}^\ast u_{mk'} \xrightarrow[M \rightarrow \infty ]{P} \\ 
		\frac{1}{M} \sum_{m=1}^M \sqrt{\eta_{k'} p \tau_p}  c_{mk} \big(\beta_{mk'} + \mathrm{tr}\big( \pmb{\Phi}^H \mathbf{R}_m  \pmb{\Phi} \widetilde{\mathbf{R}}_{k'} \big) \big),
	\end{multline}
	where $\xrightarrow{P}$ denotes the convergence in probability.\footnote{A sequence $\{ X_n \}$ of random variables converges in probability to the random variable $X$ if, for all $\epsilon > 0$, it holds that $\lim_{n \rightarrow \infty} \mathrm{Pr}(|X_n - X| > \epsilon ) = 0$, where $\mathrm{Pr}(\cdot)$ denotes the probability of an event.} Note that the second and third terms in \eqref{eq:Termv1} converge to zero. By inserting \eqref{eq:Tchev1} into the decision variable in \eqref{eq:rukv1}, we obtain the following deterministic value
	\begin{multline} \label{eq:Asympt1}
		\frac{1}{M}r_{k} \xrightarrow[M\rightarrow \infty]{P} \\
		\frac{1}{M} \sum_{k' \in \mathcal{P}_k} 
		\sum_{m=1}^M \sqrt{\eta_{k'} p \tau_p \rho_u}  c_{mk} \big(\beta_{mk'} + \mathrm{tr}\big( \pmb{\Phi}^H \mathbf{R}_m  \pmb{\Phi} \widetilde{\mathbf{R}}_{k'} \big) \big) s_{k'},
	\end{multline}
	because $\mathcal{T}_{k2}/M \rightarrow 0$ and $\mathcal{T}_{k3}/M \rightarrow 0 $ as $M \rightarrow \infty$.
	The result in \eqref{eq:Asympt1} unveils that, for a fixed $N$, the channels
	become asymptotically orthogonal. In particular, the small scale
	fading, the non-coherent interference, and the additive
	noise vanish. The only residual impairment is the pilot contamination
	caused by the users that employ the same pilot
	sequence. Due to pilot contamination, the system performance
	cannot be improved by adding more APs if MRC is used.
	The contributions of both the direct and RIS-assisted indirect
	channels appear explicitly in \eqref{eq:Asympt1} through $\beta_{mk}$ and $\mathrm{tr}(\pmb{\Phi}^H \mathbf{R}_m \pmb{\Phi} \widetilde{\mathbf{R}}_{k'})$, respectively.
	\subsubsection{Case II}
	Both $N$ and $M$ are large, i.e., $N\rightarrow \infty$ and $M \rightarrow \infty$. We first need some assumptions on the covariance
	matrices $\mathbf{R}_m$ and $\widetilde{\mathbf{R}}_{k}$, as summarized as follows.
	\begin{assumption} \label{Assumption1}
		For $m= 1,\ldots,M$ and $k=1,\ldots,K,$  the covariance matrices $\mathbf{R}_m$ and $\widetilde{\mathbf{R}}_{k}$ are assumed to fulfill the following properties
		\begin{align}
			&\underset{N}{\limsup} \, \| \mathbf{R}_m\|_2 < \infty, \underset{N}{\liminf} \, \frac{1}{N} \mathrm{tr} ( \mathbf{R}_m) > 0, \label{eq:Asymp1}\\
			&\underset{N}{\limsup} \, \| \widetilde{\mathbf{R}}_{k} \|_2 < \infty, \underset{N}{\liminf} \,  \frac{1}{N} \mathrm{tr} ( \widetilde{\mathbf{R}}_{k}) > 0. \label{eq:Asymp2v1}
		\end{align}
	\end{assumption}
	The assumptions in \eqref{eq:Asymp1} and \eqref{eq:Asymp2v1} imply that the largest
	singular value and the sum of the eigenvalues (counted with
	their mutiplicity) of the $N \times N$ covariance matrices that
	characterize the spatial correlation among the channels of the
	RIS elements are finite and positive. Dividing both sides of
	\eqref{eq:Termv1} by $MN$ and applying Tchebyshev’s theorem, we obtain
	\begin{multline}\label{eq:AsymMNUL}
		\frac{1}{MN} \sum_{m=1}^M \sqrt{\eta_{k'}} \hat{u}_{mk}^\ast u_{mk'} \xrightarrow[\substack{M \rightarrow \infty\\ N \rightarrow \infty} ]{P} \\
		\frac{1}{MN} \sum_{m=1}^M \sqrt{\eta_{k'} p\tau_p \rho_u}  c_{mk}  \mathrm{tr}\big( \pmb{\Phi}^H \mathbf{R}_m  \pmb{\Phi} \widetilde{\mathbf{R}}_{k'} \big).
	\end{multline}
	We observe that $\pmb{\Phi}^H \mathbf{R}_m \pmb{\Phi} \widetilde{\mathbf{R}}_{k'}$ is similar to $\widetilde{\mathbf{R}}_{k'}^{1/2} \pmb{\Phi} \mathbf{R}_{m} \pmb{\Phi}^H \widetilde{\mathbf{R}}_{k'}^{1/2}$, which is a positive semi-definite
	matrix.\footnote{Two matrices $\mathbf{A}$ and $\mathbf{B}$ of size $N \times N$ are similar if there exists an invertible $N \times N$ matrix $\mathbf{U}$ such that $\mathbf{B} = \mathbf{U}^{-1} \mathbf{A} \mathbf{U}$.} Because similar matrices have the same eigenvalues, it
	follows that $\mathrm{tr}\big( \pmb{\Phi}^H \mathbf{R}_m \pmb{\Phi} \widetilde{\mathbf{R}}_{k'} \big) >0$. Based on Assumption~\ref{Assumption1},
	we obtain the following inequalities
	\begin{equation} \label{eq:traceConv}
		\begin{split}
			& \frac{1}{N}\mathrm{tr}\big(\pmb{\Phi}^H \mathbf{R}_m \pmb{\Phi} \widetilde{\mathbf{R}}_{k'} \big) \stackrel{(a)}{\leq} \frac{1}{N} \| \pmb{\Phi} \|_2 \mathrm{tr}\big( \mathbf{R}_m \pmb{\Phi} \widetilde{\mathbf{R}}_{k'} \big) \\
			& \stackrel{(b)}{=}  \frac{1}{N}\mathrm{tr}\big(  \pmb{\Phi} \widetilde{\mathbf{R}}_{k'} \mathbf{R}_m \big) \stackrel{(c)}{\leq} \frac{1}{N} \| \widetilde{\mathbf{R}}_{k'} \|_2 \mathrm{tr}(\mathbf{R}_{m}),
		\end{split}
	\end{equation}
	where $(a)$ is obtained by an inequality on the trace of the product of
	matrices; $(b)$ follows because $\|\pmb{\Phi}\|_2 = 1$; and $(c)$ is obtained from  Assumption~\ref{Assumption1}. Based on
	Assumption~\ref{Assumption1}, in addition, the last inequality in \eqref{eq:rukv1} is bounded by a
	positive constant. From \eqref{eq:AsymMNUL} and \eqref{eq:traceConv}, therefore, the decision
	variable in \eqref{eq:rukv1} can be formulated as
	\begin{multline} \label{eq:Asymp2}
		\frac{1}{MN}r_{k} \xrightarrow[\substack{M\rightarrow \infty\\ N \rightarrow \infty}]{P} \\
		\frac{1}{MN} \sum_{k' \in \mathcal{P}_k} \sum_{m=1}^M \sqrt{\eta_{k'} p\tau_p \rho_u}  c_{mk} \mathrm{tr}\big( \pmb{\Phi}^H \mathbf{R}_m  \pmb{\Phi} \widetilde{\mathbf{R}}_{k'} \big) s_{k'}.
	\end{multline}
	The expression obtained in \eqref{eq:Asymp2} reveals that, as $M,N \rightarrow \infty$,
	the post-processed signal at the CPU consists of the desired
	signal of the intended user $k$ and the interference from the
	other users in $\mathcal{P}_k$. Compared with \eqref{eq:Asympt1}, we observe that \eqref{eq:Asymp2} is independent of the direct links and depends only on
	the RIS-assisted indirect links. This highlights the potentially
	promising contribution of an RIS, in the limiting regime
	$M,N \rightarrow \infty$, for enhancing the system performance.
	\vspace*{-0.1cm}
	\subsection{Uplink Ergodic Net Throughput Analysis with a Finite Number of APs and Phase Shifts}
	\vspace*{-0.1cm}
	We now focus our attention on the practical setup in which
	$M$ and $N$ are both finite. By utilizing the user-and-then forget
	channel capacity bounding method \cite{Chien2021TWC}, the uplink ergodic net
	throughput of the user $k$ can be computed in a closed-form
	expression for \eqref{eq:ULRateMRC} as given in Theorem~\ref{theorem:ULMR}. 
	\begin{theorem} \label{theorem:ULMR}
		If the CPU utilizes the MRC method, a lower bound
		closed-form expression for the uplink net throughput of
		the user~$k$ is given as follows
		\begin{equation} \label{eq:ULRateMRC}
			R_{k} = B\nu\left( 1 - \tau_p/\tau_c \right) \log_2 \left( 1 + \mathrm{SINR}_{k} \right), \mbox{[Mbps]},
		\end{equation}
		where $B$ is the system bandwidth measured in MHz and
		$0 \leq \nu \leq 1$ is the portion of each coherence interval that is
		dedicated to the uplink data transmission. The effective uplink
		signal-to-noise-plus-interference ratio (SINR) is
		\begin{equation} \label{eq:ClosedFormSINR}
			\mathrm{SINR}_{k} = \rho \eta_{k} \left( \sum_{m=1}^M \gamma_{mk} \right)^2 \big/ (\mathsf{MI}_{k} + \mathsf{NO}_{k}),
		\end{equation}
		where $\mathsf{MI}_{k}$ is the mutual interference and the noise denoted by $\mathsf{NO}_{k}$ are, respectively, given by
		\begin{align}
			&\mathsf{MI}_{k} = \rho_u \sum_{k'=1}^K \sum_{m=1}^M \eta_{k'} \gamma_{mk} \delta_{mk'} +  p \tau_p \rho_u \sum_{k' \in  \mathcal{P}_k} \sum_{m=1}^M  \eta_{k'} c_{mk}^2   \notag \\
			& \times \mathrm{tr}(\pmb{\Theta}_{mk'}^2) + p \tau_p \rho_u \sum_{k' =1 }^K \sum_{k'' \in \mathcal{P}_k}    \sum_{m=1}^M  \sum_{m'=1}^M\eta_{k'} c_{mk}c_{m'k} \times \notag \\
			& \mathrm{tr}( \pmb{\Theta}_{mk'} \pmb{\Theta}_{m'k''})  + p \tau_p \rho_u \sum_{k' \in \mathcal{P}_k \setminus \{ k\} } \eta_{k'} \left(\sum_{m=1}^M c_{mk} \delta_{mk'} \right)^2,\\
			& \mathsf{NO}_{k} = \sum_{m=1}^M \gamma_{mk},
		\end{align}
		with $\delta_{mk'} =\beta_{mk'} + \mathrm{tr}\big( \pmb{\Phi}^H \mathbf{R}_{m} \pmb{\Phi} \widetilde{\mathbf{R}}_{k'} \big)$,  $c_{mk}$ given in \eqref{eq:cmk}, and $\gamma_{mk}$ given in \eqref{eq:gammamk}.
	\end{theorem}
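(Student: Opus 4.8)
The plan is to apply the use-and-then-forget (UatF) capacity bounding technique, as referenced in \cite{Chien2021TWC}, to the decision statistic $r_k$ in \eqref{eq:ruk}. First I would add and subtract the deterministic quantity $\sqrt{\rho\eta_k}\,\mathbb{E}\{\sum_{m}\hat{u}_{mk}^\ast u_{mk}\}\,s_k$, decomposing $r_k$ into a desired-signal term carrying a \emph{known} effective channel gain, a beamforming-gain-uncertainty term, a multi-user interference term collecting the contributions $k'\neq k$, and the post-combining noise term $\mathcal{T}_{k3}$. The crucial observation is that these four contributions are mutually uncorrelated: the symbols $s_{k'}$ are independent and zero mean, the additive noise $w_m$ is independent of the channels, and the uncertainty term is zero mean by construction. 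Treating the aggregate effective noise as worst-case Gaussian then yields \eqref{eq:ULRateMRC} with the SINR in \eqref{eq:ClosedFormSINR}. The numerator follows immediately from Lemma~\ref{lemma:ChannelEst}: since the estimation error $e_{mk}$ is uncorrelated with $\hat{u}_{mk}$, one has $\mathbb{E}\{\hat{u}_{mk}^\ast u_{mk}\}=\mathbb{E}\{|\hat{u}_{mk}|^2\}=\gamma_{mk}$, and independence across APs gives $\mathbb{E}\{\sum_m\hat{u}_{mk}^\ast u_{mk}\}=\sum_m\gamma_{mk}$, whose squared modulus scaled by $\rho\eta_k$ is exactly the numerator.

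The noise term is the simplest piece of the denominator. Because $w_m\sim\mathcal{CN}(0,1)$ is independent across $m$ and independent of the channel estimates, I would compute $\mathbb{E}\{|\mathcal{T}_{k3}|^2\}=\sum_m\mathbb{E}\{|\hat{u}_{mk}|^2\}=\sum_m\gamma_{mk}$, which is precisely $\mathsf{NO}_k$. All remaining effort concentrates on the interference power. Combining the beamforming-uncertainty term with the multi-user interference, the quantity to evaluate is $\mathsf{MI}_k=\rho\sum_{k'=1}^K\eta_{k'}\,\mathbb{E}\{|\sum_{m}\hat{u}_{mk}^\ast u_{mk'}|^2\}-\rho\eta_k(\sum_m\gamma_{mk})^2$, where the subtracted term removes the coherent desired-signal contribution already accounted for in the numerator.

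The core of the proof, and its main obstacle, is the second moment $\mathbb{E}\{|\sum_{m}\hat{u}_{mk}^\ast u_{mk'}|^2\}$. I would start from the identity \eqref{eq:Termv1}, which expresses $\sum_m\hat{u}_{mk}^\ast u_{mk'}$ through $c_{mk}$, the aggregated channels $u_{mk''}$ with $k''\in\mathcal{P}_k$, and the pilot noise $w_{pmk}$. Expanding the squared modulus produces a double sum over APs $m,m'$. Independence of the channels and of the pilot noise across distinct APs collapses most cross terms, but within a single AP the aggregated channels $\{u_{mk''}\}_{k''}$ share the common channel $\mathbf{h}_m$ and are therefore correlated, forcing the evaluation of fourth-order moments of the aggregated channel. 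The genuine difficulty is that each $u_{mk''}=g_{mk''}+\mathbf{h}_m^H\pmb{\Phi}\mathbf{z}_{k''}$ contains the bilinear form $\mathbf{h}_m^H\pmb{\Phi}\mathbf{z}_{k''}$, which is a product of two Gaussian vectors and is itself non-Gaussian. I would handle this by conditioning, first averaging over $\mathbf{h}_m$ using $\mathbb{E}\{\mathbf{h}_m\mathbf{h}_m^H\}=\mathbf{R}_m$ and then over the remaining Gaussian vector via the Isserlis/Wick factorization for fourth-order Gaussian moments. This is where the spatial-correlation matrices enter through traces of products such as $\mathrm{tr}(\pmb{\Theta}_{mk'}\pmb{\Theta}_{m'k''})$, with $\pmb{\Theta}_{mk'}$ collecting the second-order statistics of the indirect link, so that $\mathrm{tr}(\pmb{\Theta}_{mk'})$ reproduces the RIS-assisted part of $\delta_{mk'}$.

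Finally, I would organize the resulting second- and fourth-order moments according to the pilot-reuse partition and to whether the AP indices coincide. Terms in which the channels factor across distinct APs, together with the pilot-noise contribution $\mathbb{E}\{|\sum_m c_{mk}w_{pmk}^\ast u_{mk'}|^2\}=\sum_m c_{mk}^2\delta_{mk'}$, produce the non-coherent interference $\rho_u\sum_{k'}\sum_m\eta_{k'}\gamma_{mk}\delta_{mk'}$ present for every $k'$; the same-AP fourth-order moments yield the variance term $p\tau_p\rho_u\sum_{k'\in\mathcal{P}_k}\sum_m\eta_{k'}c_{mk}^2\,\mathrm{tr}(\pmb{\Theta}_{mk'}^2)$ and, through the residual correlation induced by the shared $\mathbf{h}_m$, the cross term $p\tau_p\rho_u\sum_{k'}\sum_{k''\in\mathcal{P}_k}\sum_{m,m'}\eta_{k'}c_{mk}c_{m'k}\,\mathrm{tr}(\pmb{\Theta}_{mk'}\pmb{\Theta}_{m'k''})$; while the coherent contributions from users sharing the pilot of $k$ assemble into the contamination term $p\tau_p\rho_u\sum_{k'\in\mathcal{P}_k\setminus\{k\}}\eta_{k'}(\sum_m c_{mk}\delta_{mk'})^2$. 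Subtracting the coherent $k'=k$ mean already counted in the numerator leaves exactly the four terms of $\mathsf{MI}_k$. The step I expect to be most delicate is precisely this bookkeeping: tracking which fourth-order moments survive the AP-wise and user-wise (in)dependence and assigning the non-Gaussian product moments to the correct coherent and non-coherent groups.
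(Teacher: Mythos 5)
Your overall framework is the same as the paper's (the paper only sketches it, deferring details to the journal version): the use-and-then-forget bound applied to $r_k$, the decomposition into a deterministic desired-signal gain, beamforming-gain uncertainty, multi-user interference, and noise; the numerator $\rho\eta_k\big(\sum_m \gamma_{mk}\big)^2$ via $\mathbb{E}\{\hat{u}_{mk}^\ast u_{mk}\}=\gamma_{mk}$; the noise power $\mathsf{NO}_k=\sum_m\gamma_{mk}$; and the bookkeeping identity $\mathsf{MI}_k=\rho\sum_{k'}\eta_{k'}\mathbb{E}\{|\sum_m\hat{u}_{mk}^\ast u_{mk'}|^2\}-\rho\eta_k\big(\sum_m\gamma_{mk}\big)^2$ are all correct.

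The gap is in the central moment computation, which is the only genuinely hard part of this theorem. You assert that ``independence of the channels \ldots across distinct APs collapses most cross terms,'' and later you attribute the double sum $\sum_{m}\sum_{m'} c_{mk}c_{m'k}\,\mathrm{tr}(\pmb{\Theta}_{mk'}\pmb{\Theta}_{m'k''})$ to ``same-AP fourth-order moments\ldots through the residual correlation induced by the shared $\mathbf{h}_m$.'' Both claims are wrong, and they contradict the formula you are proving: a same-AP moment cannot generate a genuine double sum over $m\neq m'$, and $\mathbf{h}_m$ is not shared across APs. What is shared across APs are the user-side RIS channels: $u_{mk'}=g_{mk'}+\mathbf{h}_m^H\pmb{\Phi}\mathbf{z}_{k'}$ and $u_{m'k'}=g_{m'k'}+\mathbf{h}_{m'}^H\pmb{\Phi}\mathbf{z}_{k'}$ both contain $\mathbf{z}_{k'}$, so the aggregated channels are correlated across APs even though $\mathbf{h}_m\perp\mathbf{h}_{m'}$ and $g_{mk'}\perp g_{m'k'}$. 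Concretely, conditioning on the $\mathbf{z}$'s gives $\mathbb{E}\{u_{mk''}^\ast u_{mk'}\mid \{\mathbf{z}_k\}\}=\beta_{mk'}\,[k''=k'] + \mathbf{z}_{k''}^H\pmb{\Phi}^H\mathbf{R}_m\pmb{\Phi}\mathbf{z}_{k'}$ (the bracket being $1$ if $k''=k'$ and $0$ otherwise), so for $m\neq m'$ and $k''\in\mathcal{P}_k$ the fourth moment $\mathbb{E}\{u_{mk''}^\ast u_{mk'} u_{m'k''} u_{m'k'}^\ast\}$ contains
\begin{equation*}
\mathbb{E}\big\{\mathbf{z}_{k''}^H\pmb{\Phi}^H\mathbf{R}_m\pmb{\Phi}\mathbf{z}_{k'}\;\mathbf{z}_{k'}^H\pmb{\Phi}^H\mathbf{R}_{m'}\pmb{\Phi}\mathbf{z}_{k''}\big\}
=\mathrm{tr}\big(\pmb{\Phi}^H\mathbf{R}_m\pmb{\Phi}\widetilde{\mathbf{R}}_{k'}\,\pmb{\Phi}^H\mathbf{R}_{m'}\pmb{\Phi}\widetilde{\mathbf{R}}_{k''}\big)
=\mathrm{tr}\big(\pmb{\Theta}_{mk'}\pmb{\Theta}_{m'k''}\big),
\end{equation*}
with $\pmb{\Theta}_{mk'}=\pmb{\Phi}^H\mathbf{R}_m\pmb{\Phi}\widetilde{\mathbf{R}}_{k'}$. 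This is exactly the cross-AP ($m\neq m'$) part of the third term of $\mathsf{MI}_k$, and the same mechanism (together with the Gaussian fourth-moment identity when $k''=k'$) feeds the coherent pilot-contamination term. Under your stated independence-across-APs argument all of these $m\neq m'$ contributions would vanish and you would end up with a strictly smaller, incorrect interference power. This cross-AP coupling through the common RIS--user channels is precisely what distinguishes this derivation from the conventional cell-free Massive MIMO one; the repair is to condition on $\{\mathbf{z}_k\}$ (not on $\mathbf{h}_m$), factor the conditional expectations across APs, and only afterwards average the resulting quadratic forms in the $\mathbf{z}$'s.
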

	\begin{proof}
		The main idea of proof is to average out the randomness by using the use-and-then-forget capacity bounding technique and fundamental properties of Massive MIMO. The detailed proof is available in the journal version \cite{Chien2021TWC}.
	\end{proof}
	By direct inspection of the SINR in \eqref{eq:ClosedFormSINR}, the numerator
	increases with the square of the sum of the variances of the
	channel estimates, $\gamma_{mk},\forall m,$ thanks to the joint coherent transmission.
	On the other hand, the first term in the denominator
	represents the power of the interference. Due to the
	limited and finite number of orthogonal pilot sequences being
	used, it represents the impact
	of pilot contamination. The last term is the additive noise. The
	SINR in \eqref{eq:ClosedFormSINR} is a multivariate function of the matrix of phase
	shifts of the RIS and of the channel statistics, i.e., the channel
	covariance matrices. Compared with conventional Cell-
	Free Massive MIMO systems, the strength of the desired signal
	increases thanks to the assistance of an RIS. However, the
	coherent and non-coherent interference become more severe as
	well, due to the need of estimating both the direct and indirect
	links in the presence of an RIS.
	\begin{figure*}[t]
		\begin{minipage}{0.33\textwidth}
			\centering
			\includegraphics[trim=0.9cm 0cm 1.3cm 0.2cm, clip=true, width=2.4in]{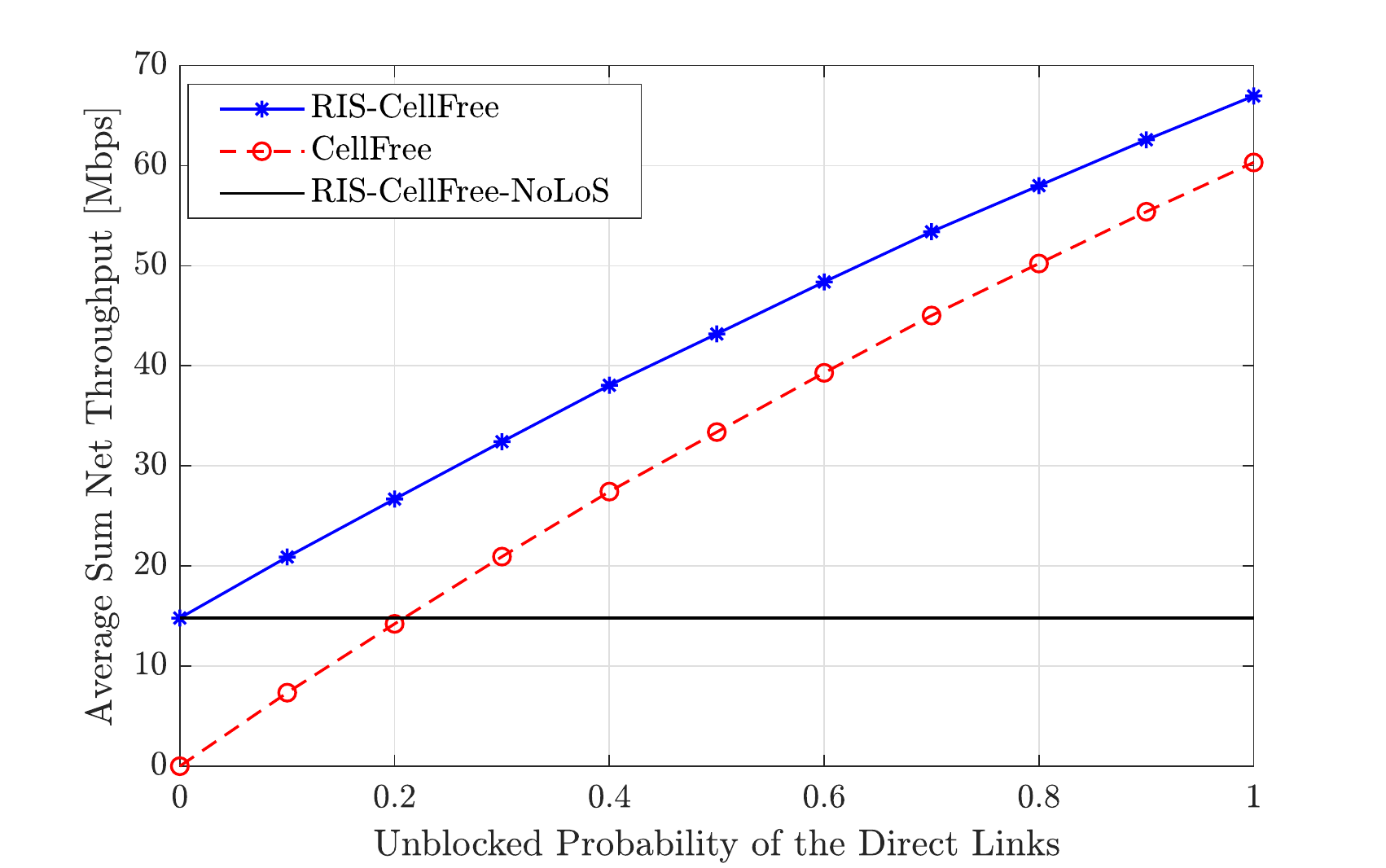} \vspace*{-0.5cm}\\
			(a)
			\vspace*{-0.1cm}
		\end{minipage}
		\begin{minipage}{0.33\textwidth}
			\centering
			\includegraphics[trim=0.9cm 0cm 1.3cm 0.5cm, clip=true, width=2.4in]{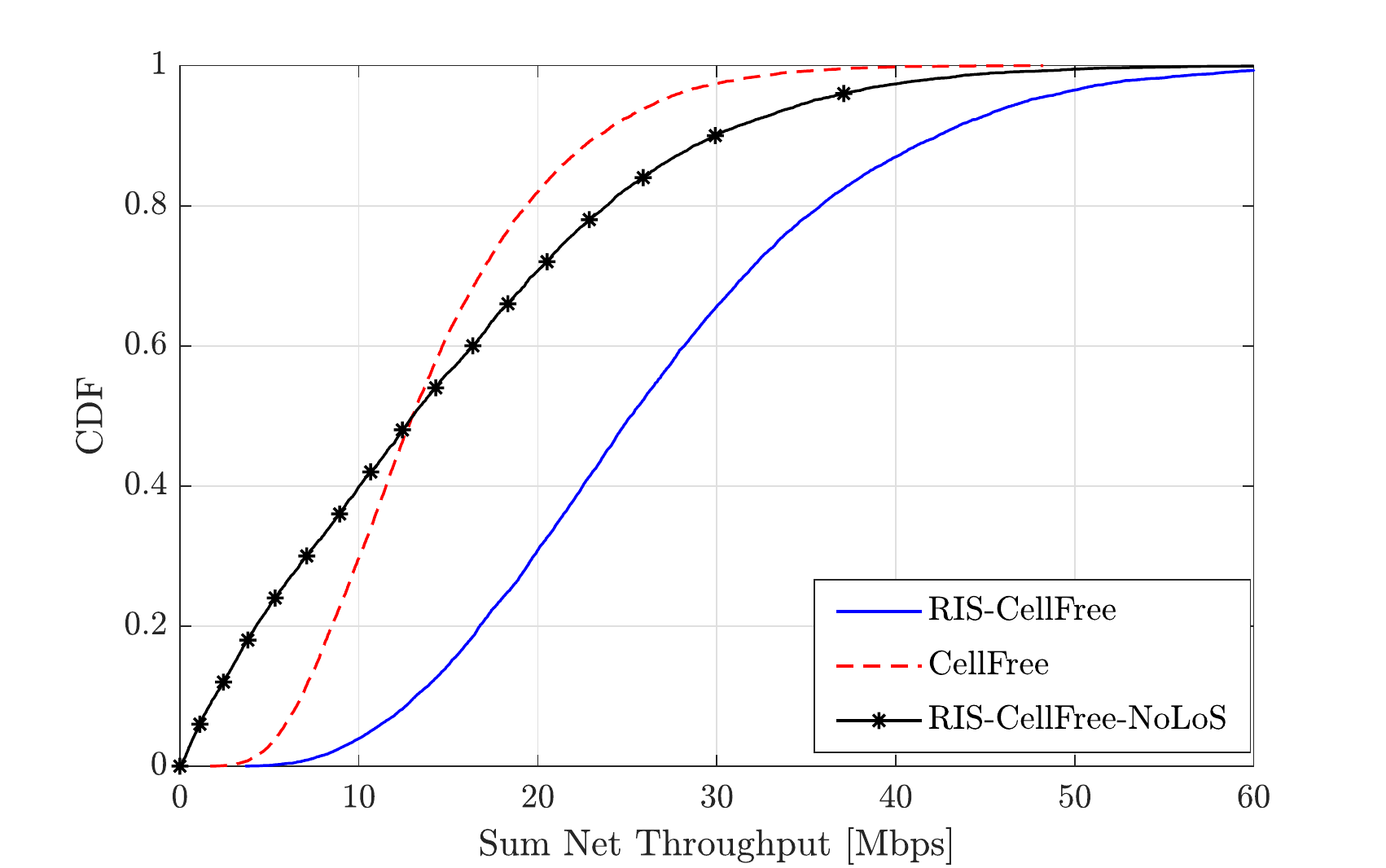} \vspace*{-0.5cm}\\
			(b)
			\vspace*{-0.1cm}
		\end{minipage}
		\begin{minipage}{0.33\textwidth}
			\centering
			\includegraphics[trim=0.9cm 0cm 1.3cm 0.5cm, clip=true, width=2.4in]{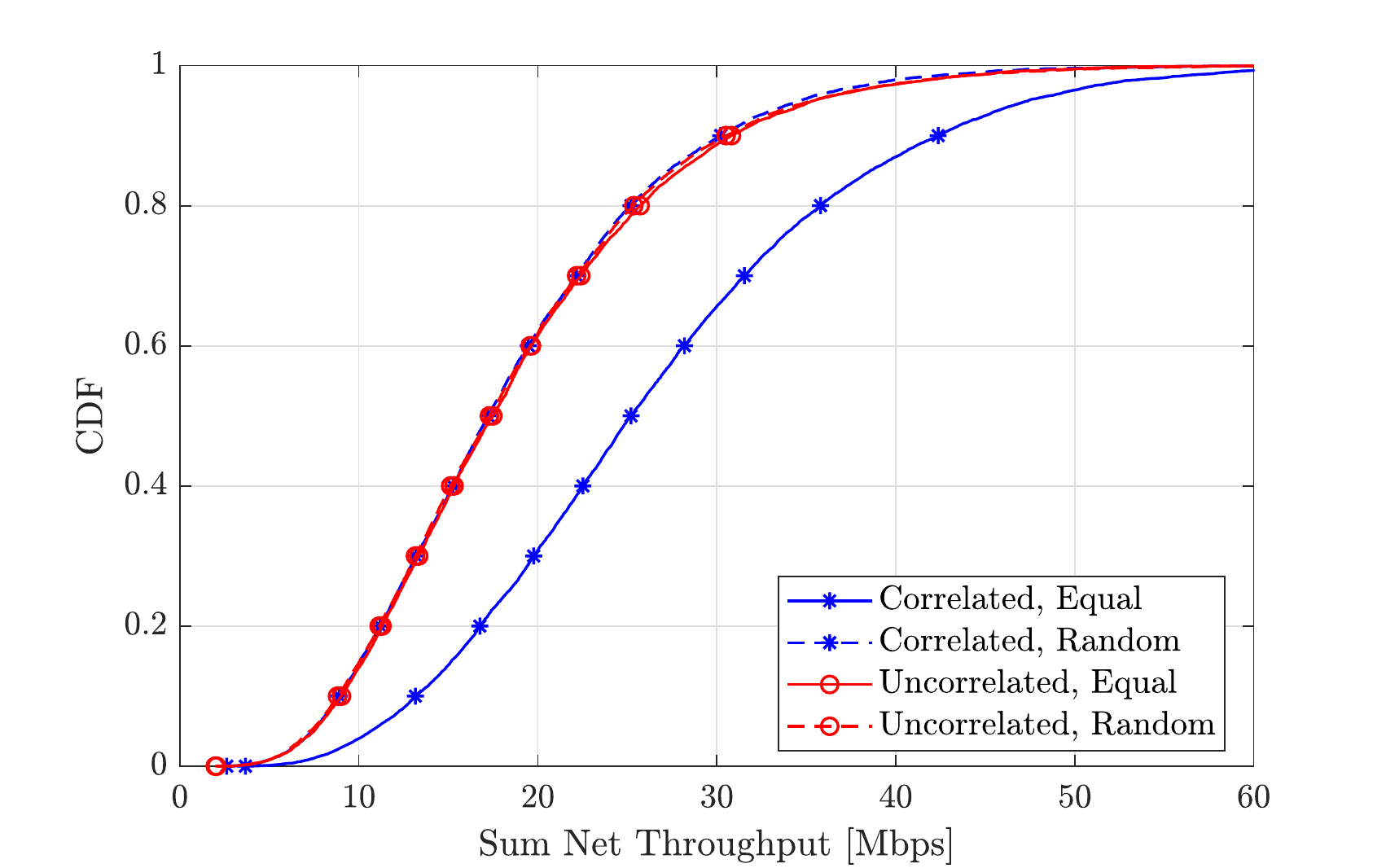} \vspace*{-0.5cm}\\
			(c)
			\vspace*{-0.1cm}
		\end{minipage}
		\caption{The sum net throughput [Mbps]: $(a)$ Average sum net throughput versus the unblocked probability of the direct links; $(b)$ CDF of the sum net
			throughput with the unblocked probability of the direct links $\tilde{p} = 0.2$; $(c)$ Sum net throughput as a function of the phase shift setup (equal or random) and the unblocked probability of the direct links $\tilde{p} = 0.2$.}
		\label{FigPerformance}
		\vspace*{-0.6cm}
	\end{figure*}
	\vspace*{-0.2cm}
	\section{Numerical Results} \label{Sec:NumRes}
	\vspace*{-0.2cm}
	We consider a geographic area of size $1$~km$^2$ that is wrapped
	around at the edges. The locations of $100$ APs
	and $10$ users are given in terms of $(x,y)$ coordinates. To
	simulate a harsh communication environment, the APs are uniformly
	distributed in the sub-region $x,y \in [-0.75, -0.5]$~km,
	while the users are uniformly distributed in the sub-region
	$x,y \in [0.375, 0.75]$~km. An RIS with $N = 900$ is located
	at the origin, i.e., $(x, y) = (0, 0)$. Each coherence interval
	comprises $\tau_c = 200$ symbols and $\tau_p =5$ orthonormal pilot
	sequences. The large-scale fading coefficients $\alpha_m$ and $\tilde{\alpha}_{mk}$ are generated according to the three-slope propagation model in \cite{ngo2017cell}. The large-scale fading coefficient $\beta_{mk}$ is formulated as $\beta_{mk} = \bar{\beta}_{mk}a_{mk},$ where $\bar{\beta}_{mk}$ is generated by the three-slope
	propagation model in \cite{ngo2017cell}. The binary variables $a_{mk}$ accounts for the probability that the direct links are unblocked, and it is
	defined as $a_{mk} =1$ with probability $\tilde{p}$. Otherwise, $a_{mk} =0$
	with probability $1 -\tilde{p}$, where $\tilde{p} \in [0,1]$ is the probability that the direct link is not blocked.The covariance matrices are
	generated according to the spatial correlation model in \eqref{eq:CovarMa} with $d_H = d_V = \lambda/4$. The pilot power is $100$~mW and $\nu = 1$.
	The power control coefficients are $\eta_k = 1, \forall k$. Without loss
	of generality, in particular, the $N$ phase shifts in $\pmb{\Phi}$ are all set
	equal to $\pi/4$, except in Fig.~\ref{FigPerformance}(c) with different phase shifts. Three system configurations are considered for comparison:
	\begin{itemize}
		\item[$i)$] \textit{RIS-Assisted Cell-Free Massive MIMO}: This is the proposed
		system model where the direct links are unblocked
		with probability $\tilde{p}$. It is denoted by ``RIS-CellFree".
		\item[$ii)$] \textit{Conventional Cell-Free Massive MIMO}: This is the
		same as the previous model with the only exception that
		the RIS is not deployed. It is denoted by ``CellFree".
		\item[$iii)$] \textit{Cell-Free Massive MIMO without the direct links}: This is the worst case study in which the direct
		links are blocked with unit probability and the uplink transmission is ensured only through the RIS.
		This setup is denoted by ``RIS-CellFree-NoLOS".
	\end{itemize}
	In Fig.~\ref{FigPerformance}(a), we illustrate the sum net throughput as a function of the probability $\tilde{p}$. In particular, the average sum net throughput is defined as $\sum_{k=1}^K \mathbb{E} \{ R_{k} \}$. Cell-Free Massive MIMO provides the worst performance if the blocking probability is large ($\tilde{p}$ is small). If the direct links are unreliable, as expected, the net throughput offered by Cell-Free Massive MIMO tends to zero if $\tilde{p} \rightarrow 0$. In addition, the proposed RIS-assisted Cell-Free Massive MIMO setup offers the best net throughput, since it can overcome the unreliability of the direct links. An RIS is particularly
	useful if $\tilde{p}$ is small since in this case the direct links are
	not able to support a high throughput. Fig.~\ref{FigPerformance}(b) compares
	the three considered systems in terms of sum net throughput (defined as $\sum_{k=1}^K  R_{k}$)
	when $\tilde{p} = 0.2$. We observe the net advantage of the proposed RIS-assisted Cell-Free Massive MIMO system. The worst-case
	RIS-assisted Cell-Free Massive MIMO system setup (i.e.,
	$\tilde{p} = 0$) outperforms the Cell-Free Massive MIMO setup in the
	absence of an RIS. Fig.~\ref{FigPerformance}(c) focuses  on the RIS-asissted Cell-Free Massive MIMO setup, since it provides
	the best performance. We compare the sum net throughput as
	a function of the phase shifts of the RISs (random and uniform
	phase shifts according to Corollary \ref{corollary:EqualPhase} in the presence of
	spatially-correlated and spatially-independent fading channels
	according to \eqref{eq:CovarMa}. In the presence of spatial correlation, we consider
	$\mathbf{R}_m = \alpha_m d_H d_V \mathbf{I}_N$ and $\widetilde{\mathbf{R}}_k = \tilde{\alpha}_{k} d_H d_V \mathbf{I}_N, \forall m,k$. If
	the spatial correlation is not considered, there is no significant
	difference between the random and uniform phase shifts setup.
	In the presence of spatial correlation, on the other hand, the
	proposed uniform phase shift design, which is obtained from
	Corollary~\ref{corollary:EqualPhase}, provides a much better throughput. This result
	highlights the relevance of using even simple optimization
	designs for RIS-assisted communications in the presence of
	spatial correlation.
	\vspace*{-0.2cm}
	\section{Conclusion}\label{Sec:Conclusion}
	\vspace*{-0.2cm}
	We have considered an RIS-assisted Cell-Free Massive
	MIMO system and have introduced an efficient channel estimation scheme to overcome the high channel estimation
	overhead. An optimal design for the phase shifts of
	the RIS that minimizes the channel estimation error has been
	introduced and has been used for system analysis. Also, a closed-form
	expression of the ergodic net throughput for the uplink data transmission phase has been proposed. Based on them, the performance of RIS-assisted Cell-Free Massive MIMO has
	been analyzed as a function of the fading spatial correlation
	and the blocking probability of the direct AP-user links. The
	numerical results have shown that the presence of an RIS is
	very useful if the AP-user links are mostly unreliable with high probability.
	\vspace*{-0.2cm}
	\bibliographystyle{IEEEtran}
	\bibliography{IEEEabrv,refs}
	\vspace*{-0.4cm}
\end{document}